\newtheorem{theorem}{Theorem}
\newtheorem{lemma}[theorem]{Lemma}
\DeclarePairedDelimiter\abs{\lvert}{\rvert} 
\begin{document}



\title{On the Secrecy Performance Over $\mathrm{N}$-Wave with Diffuse
Power Fading Channel}

%

\author{Jos\'e~David~Vega~S\'anchez, D.~P.~Moya~Osorio, F. Javier L\'opez-Mart\'inez, Martha Cecilia Paredes, and Luis~Urquiza-Aguiar
\thanks{Jos\'e~David~Vega~S\'anchez, Martha Cecilia Paredes, and Luis~Urquiza-Aguiar are with the  
Departamento de Electr\'onica, Telecomunicaciones y Redes de Informaci\'on, Escuela Polit\'ecnica Nacional (EPN),
Quito,  170525, Ecuador. (e-mail: jose.vega01@epn.edu.ec; cecilia.paredes@epn.edu.ec; luis.urquiza@epn.edu.ec)}
\thanks{D.~P.~Moya~Osorio is with the Centre for Wireless Communications (CWC), University of Oulu, Finland. (e-mail: diana.moyaosorio@oulu.fi)}
\thanks{ F. Javier L\'opez-Mart\'inez is with Departamento de Ingenier\'ia de Comunicaciones, Universidad de M\'alaga - Campus de Excelencia Internacional Anadaluc\'ia Tech., M\'alaga 29071, Spain. (e-mail: fjlopezm@ic.uma.es).}
}
\maketitle

\begin{abstract}
We investigate the effect of considering realistic propagation conditions different from classical Rice and Rayleigh fading on wireless physical layer security. Specifically, we study how the superposition of a number of dominant specular waves and diffusely propagating components impacts the achievable secrecy performance compared to conventional assumptions relying on the central limit theorem. We derive analytical expressions for the secrecy outage probability, which have similar complexity to other alternatives in the literature derived for simpler fading models. We provide very useful insights on the impact on physical layer security of $(i)$ the number; $(ii)$ the relative amplitudes and $(iii)$ the overall power of the dominant specular components. We show that it is possible to obtain remarkable improvements on the system secrecy performance when: $(a)$ the relative amplitudes of the dominant specular components for the eavesdropper channel are sufficiently large compared to those of the eavesdropper's channel eavesdropper, and $(b)$ the power of Bob's dominant components is significantly larger than the power of the Eve's dominant components. 
\end{abstract}

\begin{IEEEkeywords}
generalized fading channels, mm-Wave, N-wave with diffuse power fading model, physical layer security.
\end{IEEEkeywords}

\section{Introduction}

\IEEEPARstart{T}{he} fifth-generation (5G) of mobile networks aims to raise the capacity and performance of communication systems to unprecedented levels, including ultra-high data rates, ultra-wide radio coverage, massive simultaneously connected devices and ultra-low latencies. The new scenarios of wireless systems under the umbrella of 5G include mm-Wave, device-to-device, machine-type, and vehicular communications, among many others~\cite{Liu}. In particular, recent investigations have shown that none of the well-established fading models (e.g., Rayleigh, Rician and Nakagami-$m$) present accurate fit with field measurements in mm-Wave communications~\cite{Samimi}. One of the reasons for such mismatch relies in the fact that classical fading models heavily rely on the central limit theorem (CLT), which assumes a sufficiently large number of multipath waves arriving at the receiver ends -- and such conditions are not always met \cite{Taricco}.

In the last years, some efforts have been oriented to formulate more accurate channel models that overcome such limitation. Among them, stochastic fading models that explicitly discern between the individual multipath waves classically regarded as line-of-sight (LOS) components have been proposed as a way for bridging the gap between CLT-based approaches and purely ray-based models. Durgin's two-wave with diffuse power (TWDP) \cite{Durgin} and its generalization in \cite{Romero} are known to improve the fit to field measurements in different scenarios including mm-Wave set-ups~\cite{Samimi,Paris}, compared to conventional small-scale fading models.

On the other hand, a myriad of challenges must still be overcome so that 5G converges into a reliable, safe and efficient system. One of the most critical aspects is related to the security of information transmission, given that 5G is designed to support rather diverse applications. As a consequence, highly confidential and vulnerable data is expected to be transmitted in future 5G and beyond networks, which because of their wireless nature are sensitive to eavesdropping. Regarding this, physical layer security (PLS) \cite{Barros} emerges as a promising solution to complement traditional security systems by taking advantage of the random nature of wireless channels to provide lightweight and efficient solutions for increasing the security level in some applications~\cite{Vega,Moya}. 


The physical layer security performance in wireless channels is a rather well-investigated topic in the literature. Nevertheless, because of the intricate nature of physically-motivated wireless fading models, available results are restricted to some special cases~\cite{TWDP,FTR} on which only two specular components are considered. Very recently, it was suggested in~\cite{Espinosa} that the inability to achieve perfect secrecy in wireless channels was an artifact due to the consideration of the CLT assumption. Hence, the impact of the number of multipath waves arriving at the receiver ends, as well as their relative amplitudes, has a dramatical effect on the secrecy performance. However, the results in~\cite{Espinosa} considered only the limit case of a total absence of diffuse fading, and the derivation of analytical expressions for the secrecy performance metrics was not possible for a an arbitrary number of waves. 

In this paper, we investigate the PLS performance in a wireless set-up, by assuming that the received signal is built from the superposition of an arbitrary number $\mathrm{N}$ of dominant multipath waves, plus some additional diffuse components -- this will be referred to as $\mathrm{N}$-wave with diffuse power (NWDP) fading, for which some formulations have been recently proposed in order to deal with its rather unwieldy nature \cite{Chun,Juanma}. Our goal is to perform a fine-grain characterization of the role of individual multipath waves on the secrecy performance, and to support our findings with analytical results. We derive exact expressions for the secrecy outage probability (SOP) for an arbitrary number of dominant waves at the desired and eavesdropping ends, as well as simplified approximations that become asymptotically tight in the high signal-to-noise ratio (SNR) regime. Some useful insights for improving the secrecy performance in this scenario will be derived, which are inherently linked to the underlying propagation mechanisms and characteristics captured by the NWDP fading model. The main contributions of this paper are as follows:

\begin{itemize}
\item An exact closed-form expression for the SOP in terms of well-know functions for the classical Wyner's channel model under NWDP fading. 
\item A high SNR approximation of the SOP is also derived, which can be used straightforwardly in the context of PLS. The merits of such expression are: $(i)$ when $\overline{\gamma}_\mathrm{E}$ is at high SNR regime, it is highly tight to the exact SOP; $(ii)$ it notably reduces the computational effort concerning the exact SOP. This fact helps to the wireless system designers when requiring quick evaluation of security risks.  
\item Some useful insights into the system are also provided through 
the asymptotic analysis based on the exact analytical expression of the SOP.

\end{itemize}

The remainder of this paper is organized as follows. System and channel models are described in Section II. Section III derives closed-form expressions for $(i)$ the SOP; $(ii)$ a high SNR regime of the SOP; $(iii)$ the asymptotic behaviour of the SOP over NWDP fading channel. 
Section IV shows illustrative numerical results and discussions. Finally, concluding remarks are provided in Section V.

\emph{Notation}: Throughout this paper, $f_{(Z)}(z)$ and $F_{(Z)}(z)$ denote the probability density function (PDF) and the cumulative distribution function (CDF) of a random variable $Z$. $\mathbb{E} \left [ \cdot \right ]$ denotes expectation,  $\Pr\left \{ \cdot  \right \}$ denotes probability, and $\abs{\cdot}$ denotes the absolute value. In addition, $L_n(\cdot)$ denotes the Laguerre polynomial~\cite[Eq.~(22.2.13)]{Abramowitz},  $\Gamma(\cdot)$ denotes the gamma function~\cite[Eq.~(6.1.1)]{Abramowitz}; $\gamma(\cdot,\cdot)$, the lower incomplete gamma
function~\cite[Eq.~(6.5.2)]{Abramowitz}; ${}_2F_1\left(\cdot,\cdot;\cdot;\cdot\right)$, denotes the hypergeometric function~\cite[Eq.~(15.1.1)]{Abramowitz}, and $\left(\cdot\right)_{\left(\cdot\right)}$ is the Pochhammer symbol~\cite[Eq.~(6.1.222)]{Abramowitz}.



\section{System Model
 }


We consider the classic Wyner's wiretap channel as depicted in Fig.~\ref{sistema1}, where a legitimate transmitter Alice ($\mathrm{A}$) sends confidential messages to the legitimate receiver Bob ($\mathrm{B}$) through the main channel, while the eavesdropper Eve ($\mathrm{E}$) tries to intercept these messages from its received signal over the eavesdropper channel. It is assumed that the main and eavesdropper channels experience independent quasi-static fading. 
Without loss of generality, we assume that all nodes are equipped with a single antenna.

\begin{figure}[t]
\centering 
\psfrag{A}[Bc][Bc][0.8]{$\mathrm{A}$}
\psfrag{B}[Bc][Bc][0.8]{$\mathrm{B}$}
\psfrag{E}[Bc][Bc][0.8]{$\mathrm{E}$}
\psfrag{U}[Bc][Bc][0.8]{$h_{\mathrm{AB}}$}
\psfrag{w}[Bc][Bc][0.8][-16]{$h_{\mathrm{AE}}$}
\psfrag{Main channel}[Bc][Bc][0.6]{Main channel} 
\psfrag{Wiretap channel}[Bc][Bc][0.6]{Wiretap channel} 
\includegraphics[width=0.7\linewidth]{./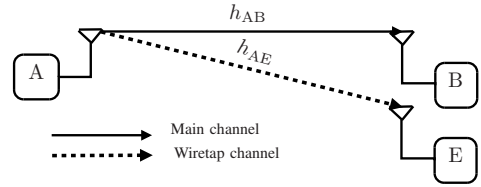} \caption{Wiretap channel consisting of a legitimate pair and one eavesdropper.}
\label{sistema1}
\end{figure}
We express the signal at each of the receiving ends as a superposition of $\mathrm{N}$ multipath waves arising from dominant specular reflections, and $M$ additional waves associated to diffuse scattering:
\begin{equation}
\label{eq:1}
    \mathrm{R} \exp\left( {\mathrm{j}\theta} \right)=\sum_{i=1}^{\mathrm{N}} V_i \exp\left({\mathrm{j}\theta_i}\right)+\sum_{k=1}^{M} V_k \exp\left({\mathrm{j}\theta_k} \right)
\end{equation}

Because each diffuse scatterer is able to generate several multipath waves \cite{Durginb}, we can safely assume that $M\rightarrow\infty$ and hence the diffuse component{\footnote{We note that the consideration of arbitrary $N$ in \eqref{eq:1} allows for individually accounting for the effect of having multiple specular waves and largely differs from the conventional assumptions in fading modeling, reducing only for $N=0,1,2$ to the Rayleigh, Rician and TWDP cases, respectively \cite{Durgin}.
}} tends to be Gaussian distributed, i.e., $\sum_{k=1}^{M} V_k \exp\left({\mathrm{j}\theta_k}\right)\approx V_d\exp\left({\mathrm{j}\theta_d}\right)$, so that $V_d$ is Rayleigh distributed with $\mathbb{E}\{|V_d|^2\}=2\sigma^2=\Omega$. 

In \eqref{eq:1}, $V_i \exp(\mathrm{j}\theta_i)$ denotes the $i$-$th$ specular component having a constant amplitude $V_i$ and a uniformly distributed random phase $\theta_i \sim \mathcal{U}[0, 2\pi) $. The random phases for each dominant wave
are assumed to be statistically independent.

Let $\gamma \stackrel{\Delta}{=} \gamma_0 \mathrm{R}^2$ be the instantaneous received SNR through, where $\gamma_0\stackrel{\Delta}{=} P_{\mathrm{T}}/N_0$ is defined as the transmit SNR, with $P_{\mathrm{T}}$ being the transmit power and $N_0$ being the mean power of the additive white Gaussian noise.  Note that $\gamma$ can also be redefined for the sake of convenience as $\gamma=\overline\gamma|h|^2$, where $h$ denotes any normalized fading channel, i.e., $\mathbb{E}\{|h|^2\}=1$ and $\overline{\gamma}$ representing the average receive SNR. According to the formulation in~\cite{Chun}, the corresponding PDF and CDF of $\gamma$ over NWDP fading channel are:   
\begin{subequations}
\label{eq:SNR}
\begin{align}
\label{eq:2}
f_i(\gamma_i)=\frac{1}{\overline{\gamma}_i}\exp\left ( -\frac{\gamma_i}{\overline{\gamma}_i} \right )\sum_{n_i=0}^{\infty}C_{n_i}L_{n_i}\left ( \frac{\gamma_i}{\overline{\gamma}_i} \right ),
\end{align}
 \begin{align}\label{eq:3}
F_i(\gamma_i)=\sum_{n_i=0}^{\infty}C_{n_i}\sum_{k_i=0}^{n_i}\frac{(-1)^{k_i}}{k_i!}\binom{n_i}{k_i}\gamma \left ( k_i+1,\frac{\gamma_i}{\overline{\gamma}_i} \right ),
\end{align}
\end{subequations}
where  $i \in \left \{ \mathrm{B},\mathrm{E} \right \}$ 
represents either the main channel or the eavesdropper channel, $\overline{\gamma}_i$ is the average receive SNR at $\mathrm{B}$ or $\mathrm{E}$ as previously stated, i.e.,
\begin{align}\label{eq:4}
\overline{\gamma}_i=\gamma_0\mathbb{E}\left [R_i^2\right ]r_i^{-\eta_i}=\gamma_0\left ( \sum_{n=0}^{\mathbb{N}_i}V_{n,i}^2+\Omega_i\right )r_i^{-\eta_i},
\end{align}
where $\eta_i$ is the path-loss exponent, and $r_i$ is the propagation distance\footnote{Here, as in the LOS ball blockage model, we assume that $r_i$ lies within a sphere of fixed radius $R_\mathrm{B}$. Interested readers can revise~\cite{Bai} for further guidance about
 simplification of the LOS region as a fixed equivalent
LOS ball in mm-Wave cellular networks.}. The $C_{n_i}$ coefficient can be calculated recursively by~\cite{Chun}
\begin{align}\label{eq:5}
C_{n_i}=\sum_{k_i=0}^{n_i}\frac{\left ( -\epsilon_i \right )^{k_i}}{k_i!}\binom{n_i}{k_i}u_{N_i+1}^{(2k_i)},
\end{align}
where $\epsilon_i=\left ( \mathbb{E}\left [R_i^2\right ] \right )^{-1}$, and
 \begin{align}\label{eq:6}
 u_j^{2k_i}=\sum_{m=0}^{k_i}\binom{k_i}{m}^2u_{j-1}^{(2m)}v_j^{(2k_i-2m)}, \text{ for  $j=2,\dots,N_i+1$},
\end{align}
where the initial value is $u_1^{2k_i}=v_1^{2k_i}$, and 
 \begin{equation}
       v_j^{2k_i}=\begin{cases}
      V_{j,i}^{2k_i}, \hspace{9.5mm} \text{ for  $j=1\dots N_i$},\\
       \left ( 1 \right )_{k_i} \left ( \Omega_i \right )^{k_i},\text{ for  $j=N_i+1$}.
         \end{cases}
\label{eq:7}
\end{equation}

\section{Secrecy Outage Probability Analysis}
\subsection{Exact SOP Analysis}
We consider a passive eavesdropping scenario, so that Alice has no channel state information (CSI) of Eve's channel. Hence, Alice's only choice is to encode the confidential data
into codewords at a constant rate $R_{\mathrm{S}}$. This can occur in a practical setup where Eve is silent during all transmissions. According to~\cite{Wyner}, the secrecy capacity is
obtained as
\begin{align}\label{eq:8}
C_\mathrm{S}&=\!\text{max}\left \{C_\mathrm{B}-C_\mathrm{E},0  \right \} \nonumber \\
&=\!\text{max}\left \{\log_2(1+\gamma_\mathrm{B})-\log_2(1+\gamma_\mathrm{E}),0  \right \} 
\end{align}
With these considerations, secrecy is achieved only in those instants on which
 $R_\mathrm{S}\leq C_{\mathrm{S}}$, and is compromised otherwise. In this context, the SOP is defined as the probability that the instantaneous secrecy capacity $C_\mathrm{S}$ falls below a target secrecy rate threshold $R_{\mathrm{S}}$, and can be expressed as~\cite{Barros}
 \begin{align}\label{eq:sop}
 \text{SOP}&=\Pr\left \{ C_\mathrm{S}\left ( \gamma_\mathrm{B},\gamma_\mathrm{E} \right ) < R_{\mathrm{S}}  \right \}\nonumber \\ 
&=\Pr\left \{ \left ( \frac{1+\gamma_\mathrm{B}}{1+\gamma_\mathrm{E}} \right )< 2^{R_{\mathrm{S}}}\buildrel \Delta \over  = \tau \right \}\nonumber \\ 
 &=\Pr\left \{ \gamma_\mathrm{B}< \tau \gamma_\mathrm{E}+\tau-1 \right \} \nonumber \\ 
 &=\int_{0}^{\infty}F_{\gamma_\mathrm{B}}\left ( \tau \gamma_\mathrm{E}+\tau-1 \right )f_{\gamma_\mathrm{E}}(\gamma_\mathrm{E})d\gamma_\mathrm{E}.
\end{align}
Furthermore, a high SNR approximation of the SOP can be obtained from~\eqref{eq:sop} as
 \begin{align}\label{eq:sopL}
 \text{SOP}_{\text{A}}&=\Pr\left \{ \gamma_\mathrm{B}< \tau \gamma_\mathrm{E}\right \} \leq \text{SOP} .
\end{align}
Substituting~\eqref{eq:SNR} into~\eqref{eq:sop} and~\eqref{eq:sopL}, we can obtain the SOP and the $\text{SOP}_{\text{A}}$, respectively, over NWDP fading channels in the following Lemma.
\begin{lemma}\label{Lema1}
 The SOP and the $\text{SOP}_{\text{A}}$ over NWDP fading channels can be
obtained as~\eqref{eq:SOPExact} and~\eqref{eq:SOPLower}, respectively, at the top of the next page.
\end{lemma}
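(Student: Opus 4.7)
The plan is to inject the series representations of $F_{\gamma_\mathrm{B}}$ and $f_{\gamma_\mathrm{E}}$ from~\eqref{eq:SNR} directly into the integral~\eqref{eq:sop} (respectively~\eqref{eq:sopL}) and to reduce every summand to a standard exponential integral in~$\gamma_\mathrm{E}$. All constants coming from the NWDP expansion are absorbed into the coefficients $C_{n_\mathrm{B}}$ and $C_{n_\mathrm{E}}$, so the entire derivation becomes a sequence of elementary algebraic manipulations.

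First, I would substitute~\eqref{eq:3} evaluated at the shifted argument $\tau\gamma_\mathrm{E}+\tau-1$ together with~\eqref{eq:2} for Eve. The central technical step in the exact expression is to eliminate the lower incomplete gamma $\gamma\bigl(k_\mathrm{B}+1,(\tau\gamma_\mathrm{E}+\tau-1)/\overline{\gamma}_\mathrm{B}\bigr)$. Because $k_\mathrm{B}+1$ is a positive integer, I would invoke the finite-series identity $\gamma(k+1,x)=k!\bigl[1-e^{-x}\sum_{j=0}^{k}x^{j}/j!\bigr]$ so that the argument appears only polynomially multiplied by an exponential. With $x=(\tau\gamma_\mathrm{E}+\tau-1)/\overline{\gamma}_\mathrm{B}$, the exponential factorises as $e^{-(\tau-1)/\overline{\gamma}_\mathrm{B}}\,e^{-\tau\gamma_\mathrm{E}/\overline{\gamma}_\mathrm{B}}$, while the polynomial part is expanded via the binomial theorem, $(\tau\gamma_\mathrm{E}+\tau-1)^{j}=\sum_{l=0}^{j}\binom{j}{l}(\tau\gamma_\mathrm{E})^{l}(\tau-1)^{j-l}$, producing a finite linear combination of powers of $\gamma_\mathrm{E}$.

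Next, I would expand the Laguerre polynomial appearing in $f_{\gamma_\mathrm{E}}$ through its explicit finite series $L_{n}(y)=\sum_{m=0}^{n}\binom{n}{m}(-y)^{m}/m!$ with $y=\gamma_\mathrm{E}/\overline{\gamma}_\mathrm{E}$. After these substitutions every remaining integrand is of the form $\gamma_\mathrm{E}^{l+m}\exp\bigl(-\gamma_\mathrm{E}(\tau/\overline{\gamma}_\mathrm{B}+1/\overline{\gamma}_\mathrm{E})\bigr)$, multiplied by constants inherited from the various Pochhammer/binomial/factorial factors. Each such integral evaluates in closed form via $\int_{0}^{\infty}\gamma_\mathrm{E}^{a}e^{-b\gamma_\mathrm{E}}\,d\gamma_\mathrm{E}=\Gamma(a+1)/b^{a+1}$, and gathering all terms yields the claimed expression \eqref{eq:SOPExact}.

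For the high-SNR form $\text{SOP}_{\text{A}}$ the procedure is identical except that $\tau\gamma_\mathrm{E}+\tau-1$ is replaced by $\tau\gamma_\mathrm{E}$, so the prefactor $e^{-(\tau-1)/\overline{\gamma}_\mathrm{B}}$ disappears and the binomial layer over~$l$ collapses, producing the markedly simpler \eqref{eq:SOPLower}. The main obstacle I anticipate is not analytical but notational: keeping the several coupled index sums (the two outer NWDP series, together with $k_\mathrm{B}$, $j$, $l$ and $m$) in a form compact enough to display while preserving the clean $\Gamma$-ratio structure produced by the last integration. Convergence of the outer infinite sums is inherited from the parent NWDP series in~\cite{Chun}, so no additional analytic justification is required.
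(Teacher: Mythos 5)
Your proof is correct and follows the same skeleton as the paper's: substitute \eqref{eq:SNR} into \eqref{eq:sop}, expand the lower incomplete gamma function via the finite-series identity $\gamma(k+1,x)=k!\,[1-e^{-x}\sum_{j=0}^{k}x^{j}/j!]$ (the paper cites this as \cite[Eq.~(8.352.1)]{Gradshteyn}), factor the exponential, and apply the binomial theorem to $(\tau\gamma_\mathrm{E}+\tau-1)^{j}$. The one place you deviate is the final integration: the paper keeps the Laguerre polynomial $L_{n_\mathrm{E}}(\gamma_\mathrm{E}/\overline{\gamma}_\mathrm{E})$ intact and invokes the tabulated integral $\int_{0}^{\infty}t^{a}e^{-st}L_{n}(\lambda t)\,dt$ (of the type \cite[Eq.~(7.414.7)]{Gradshteyn}), which directly yields the $\Gamma(1+a)\,{}_2F_1(1+a,-n_\mathrm{E};1;\cdot)$ structure appearing in \eqref{eq:SOPExact}; you instead expand $L_{n}$ into its explicit finite series and integrate monomials. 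Your route is valid and more elementary, but it leaves you one step short of the stated formulas: after term-by-term integration you obtain a finite sum over the Laguerre index $m$ of the form $\sum_{m=0}^{n_\mathrm{E}}\frac{(-n_\mathrm{E})_m(1+a)_m}{(1)_m\,m!}\bigl(\tfrac{\overline{\gamma}_\mathrm{B}}{\overline{\gamma}_\mathrm{B}+\overline{\gamma}_\mathrm{E}\tau}\bigr)^{m}$, which you must still recognize as the terminating Gauss hypergeometric ${}_2F_1\bigl(-n_\mathrm{E},1+a;1;\tfrac{\overline{\gamma}_\mathrm{B}}{\overline{\gamma}_\mathrm{B}+\overline{\gamma}_\mathrm{E}\tau}\bigr)$ to recover \eqref{eq:SOPExact} and \eqref{eq:SOPLower} exactly. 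You should also make explicit the verification that the leading ``$1$'' of \eqref{eq:SOPExact} emerges, i.e., that $\int_{0}^{\infty}e^{-\gamma_\mathrm{E}/\overline{\gamma}_\mathrm{E}}L_{n_\mathrm{E}}(\gamma_\mathrm{E}/\overline{\gamma}_\mathrm{E})\,d\gamma_\mathrm{E}=\overline{\gamma}_\mathrm{E}\,\delta_{n_\mathrm{E},0}$ (which your monomial expansion gives as $\overline{\gamma}_\mathrm{E}(1-1)^{n_\mathrm{E}}$) so that the constant term collapses to unity; the paper handles this explicitly via $C_1 k_\mathrm{B}!\,\overline{\gamma}_\mathrm{E}=1$. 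With those two additions your argument reproduces the paper's result.
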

\begin{proof}
	See Appendix~\ref{ap:SOPs}.
\end{proof}

\textbf{Remark 1.} Notice that the derived analytical expressions for both the SOP and $\text{SOP}_{\text{A}}$ are expressed in terms of infinite series representations. This is also the case of the analysis in \cite{TWDP} for TWDP based on the \textit{approximate} PDF in \cite{Durgin}, which arises as a special case of our analysis. 
\subsection{Asymptotic Analysis
}
To get further insights about the role of the fading parameters in the system performance, the main concern in
this section is to derive asymptotic closed-form expressions to investigate the behavior 
of the SOP given in~\eqref{eq:sop} at high-SNR regime. Here, we assume the following scenarios: $(i)$ both $\overline{\gamma}_\mathrm{B}$ and $\overline{\gamma}_\mathrm{E} $ go to infinity,
while the ratio between these SNRs is kept unchanged\footnote{This scenario corresponds to the case when both $\mathrm{B}$ and $\mathrm{E}$ are close to $\mathrm{A}$.}; $(ii)$ $\overline{\gamma}_\mathrm{B}\rightarrow \infty$ while $\overline{\gamma}_\mathrm{E}$ is
kept fixed\footnote{
This scenario corresponds to the case that $\mathrm{A}$ is very close to $\mathrm{B}$ and $\mathrm{E}$ is located far way.}. Our goal would be obtaining asymptotic expressions in the form $\mathrm{SOP}^{\infty}\approx \mathrm{G}_c\overline{\gamma}_\mathrm{B}^{-\mathrm{G}_d}$, where $\mathrm{G}_c$ and $\mathrm{G}_d$ represent the secrecy array gain and the secrecy diversity gain, respectively. However, as we will later see, such expressions will not be of much practical use as in many cases, such asymptotic does not kick in until very low probabilities are considered. Hence, our asymptotic expressions will incorporate additional terms on which the exponent of $\overline\gamma_\mathrm{B}$ play a relevant role on the SOP decay for practical operational values. Next, the corresponding asymptotic expressions of the SOP over NWDP fading channels are given in the following Lemma.
\begin{lemma}\label{Lema2}
The asymptotic closed-form expressions of the SOP over NWDP fading channels for the cases in that both $\overline{\gamma}_\mathrm{B}\rightarrow \infty,\overline{\gamma}_\mathrm{E}\rightarrow \infty$, and only $\overline{\gamma}_\mathrm{B}\rightarrow \infty$ can be
obtained as~\eqref{eq:SOPAsin1} and~\eqref{eq:SOPAsin2}, respectively, at the top of the next page.
\end{lemma}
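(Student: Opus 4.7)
The plan is to start from the exact SOP representation in~\eqref{eq:sop} and exploit the fact that, as $\overline{\gamma}_\mathrm{B}\to\infty$, the argument of the lower incomplete gamma function inside $F_{\gamma_\mathrm{B}}$ tends to zero. Specifically, for small $x$ one has the absolutely convergent Taylor series $\gamma(k+1,x)=\sum_{n=0}^{\infty}\frac{(-1)^{n}x^{k+1+n}}{n!(k+1+n)}$, whose leading behaviour is $x^{k+1}/(k+1)$. Substituting $x=(\tau\gamma_\mathrm{E}+\tau-1)/\overline{\gamma}_\mathrm{B}$ into the CDF~\eqref{eq:3} converts $F_{\gamma_\mathrm{B}}(\tau\gamma_\mathrm{E}+\tau-1)$ into a power series in $\overline{\gamma}_\mathrm{B}^{-1}$ whose coefficients are polynomials in $\gamma_\mathrm{E}$.

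Next, I would expand each factor $(\tau\gamma_\mathrm{E}+\tau-1)^{k_\mathrm{B}+1+n}$ by the binomial theorem into a linear combination of monomials $\gamma_\mathrm{E}^{\ell}$. Plugging into~\eqref{eq:sop} reduces the computation to evaluating integrals of the form $\int_{0}^{\infty}\gamma_\mathrm{E}^{\ell}\,f_{\gamma_\mathrm{E}}(\gamma_\mathrm{E})\,d\gamma_\mathrm{E}$, i.e., the moments of the NWDP distribution at Eve. Using the PDF~\eqref{eq:2}, each such moment splits into terms of the shape $\overline{\gamma}_\mathrm{E}^{\ell}\int_{0}^{\infty}t^{\ell}e^{-t}L_{n_\mathrm{E}}(t)\,dt$, which is available in closed form by the orthogonality/quasi-orthogonality properties of the Laguerre polynomials and, in particular, vanishes whenever $\ell<n_\mathrm{E}$. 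This truncates the double sum over $n_\mathrm{E},k_\mathrm{E}$ to a finite number of surviving terms for each fixed asymptotic order.

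For Case (i), where $\overline{\gamma}_\mathrm{B}/\overline{\gamma}_\mathrm{E}$ is held constant, I would change variables $\overline{\gamma}_\mathrm{E}=\kappa\overline{\gamma}_\mathrm{B}$ so that every moment contributes $\overline{\gamma}_\mathrm{B}^{\ell}$, and this exactly cancels part of the $\overline{\gamma}_\mathrm{B}^{-(k_\mathrm{B}+1+n)}$ factor. Grouping terms by their net power of $\overline{\gamma}_\mathrm{B}^{-1}$ and keeping the dominant one together with the declared correction terms produces~\eqref{eq:SOPAsin1}. For Case (ii), $\overline{\gamma}_\mathrm{E}$ is a constant, so the moments are finite constants and the power of $\overline{\gamma}_\mathrm{B}^{-1}$ of each contribution equals its order in the incomplete-gamma expansion; retaining the first non-vanishing order yields~\eqref{eq:SOPAsin2}.

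The main obstacle will be bookkeeping: justifying the interchange of the infinite sums over $n_\mathrm{B},k_\mathrm{B},n_\mathrm{E},k_\mathrm{E}$ with the Taylor index $n$ and the integral in $\gamma_\mathrm{E}$, and then regrouping the resulting multi-index series so that \emph{all} terms of a given order in $\overline{\gamma}_\mathrm{B}^{-1}$ are collected. As the authors stress, a pure leading-order $\mathrm{G}_c\overline{\gamma}_\mathrm{B}^{-\mathrm{G}_d}$ expression is not sharp enough for practical SNR ranges, so the subtle part is showing that the retained subleading terms close into the finite, well-known functions displayed in the lemma rather than leaving an uncontrolled tail; this is where the Laguerre-moment identities and the selection rule $\ell\ge n_\mathrm{E}$ are essential.
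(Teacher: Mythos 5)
Your plan is a reasonable asymptotic analysis of the SOP, and parts of it (the small-argument expansion of $\gamma(k+1,x)$, the reduction to Laguerre moments, the selection rule that $\int_0^\infty t^\ell e^{-t}L_{n_\mathrm{E}}(t)\,dt$ vanishes for $\ell<n_\mathrm{E}$) are correct. But as written it cannot terminate at the formulas the lemma actually asserts, because \eqref{eq:SOPAsin1} and \eqref{eq:SOPAsin2} are not the asymptotics of the exact SOP integral you start from. The paper derives both expressions from the $\mathrm{SOP}_{\mathrm{A}}$ form $\Pr\{\gamma_\mathrm{B}<\tau\gamma_\mathrm{E}\}$, i.e.\ the $\tau-1$ offset is dropped at the outset, and only the leading term $x^{k_\mathrm{B}+1}/(k_\mathrm{B}+1)$ of the incomplete gamma is retained. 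Your binomial expansion of $(\tau\gamma_\mathrm{E}+\tau-1)^{k_\mathrm{B}+1+n}$ produces, already at leading order $\overline{\gamma}_\mathrm{B}^{-(k_\mathrm{B}+1)}$, a full sum of terms $\binom{k_\mathrm{B}+1}{a}(\tau-1)^{k_\mathrm{B}+1-a}\tau^a\,\mathbb{E}[\gamma_\mathrm{E}^a]$, whereas \eqref{eq:SOPAsin2} contains only the single term corresponding to $a=k_\mathrm{B}+1$. So for Case (ii) you would obtain a (strictly speaking more faithful) asymptotic constant that differs from the one stated; to land on \eqref{eq:SOPAsin2} you must either start from \eqref{eq:sopL} or explicitly argue that the $\tau-1$ contributions are to be discarded.

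The gap is more serious for Case (i). In the paper, Eve's PDF is \emph{also} replaced by its truncated power series in $1/\overline{\gamma}_\mathrm{E}$, which is where the extra summation index $k_\mathrm{E}$ in \eqref{eq:SOPAsin1} comes from; this substitution produces the divergent integral $\int_0^\infty\gamma_\mathrm{E}^{k_\mathrm{B}+k_\mathrm{E}+1}d\gamma_\mathrm{E}$, which the authors then regularize by Gauss--Laguerre quadrature with a small number of nodes --- the factors $w_i$, $l_i$ and $\sum_{i=1}^n w_i e^{l_i} l_i^{k_\mathrm{B}+k_\mathrm{E}+1}$ in \eqref{eq:SOPAsin1} are precisely those quadrature weights and abscissas. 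Your route keeps the exact Eve PDF and computes genuine (finite) moments, so nothing in your derivation can generate either the $k_\mathrm{E}$ sum or the quadrature nodes; you would arrive at a structurally different expression. In short, your proposal proves a different (and arguably better-founded) asymptotic statement, but it does not establish the lemma as stated; to do that you need the two additional ingredients above --- working from $\mathrm{SOP}_{\mathrm{A}}$ with only the leading incomplete-gamma term, and, for Case (i), the asymptotic replacement of $f_\mathrm{E}$ followed by the Gauss--Laguerre evaluation of the resulting integral.
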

\begin{proof}
See Appendix~\ref{ap:SOPAsintotas}.
\end{proof}

\begin{figure*}[ht]
	\begin{normalsize}
\begin{align}\label{eq:SOPExact}
\text{SOP}
 = & 1-\sum_{n_\mathrm{B}=0}^{\infty}C_{n_{\mathrm{B}} }\sum_{k_\mathrm{B}=0}^{n_\mathrm{B}}(-1)^{k_\mathrm{B}}\binom{n_\mathrm{B}}{k_\mathrm{B}}\left ( \frac{1}{\overline{\gamma}_\mathrm{E}}  \right )\sum_{n_\mathrm{E}=0}^{\infty}C_{n_\mathrm{E}}\sum_{q=0}^{k_\mathrm{B}}\frac{1}{q!}\left ( \frac{1}{\overline{\gamma}_\mathrm{B}} \right )^q \exp\left ( -\frac{\tau-1}{\overline{\gamma}_\mathrm{B}} \right ) \sum_{a=0}^{q}\binom{q}{a}\left ( \tau-1 \right )^{q-a}\tau^a  \nonumber  \\ & \times  \left ( \frac{1}{\overline{\gamma}_\mathrm{E}}+\frac{\tau}{\overline{\gamma}_\mathrm{B}} \right )^{-1-a}\Gamma\left ( 1+a \right ){ }_2F_1\left ( 1+a,-n_\mathrm{E};1;\frac{\overline{\gamma}_\mathrm{B}}{\overline{\gamma}_\mathrm{B}+\overline{\gamma}_\mathrm{E} \tau} \right )
 \end{align}
	\end{normalsize}
\end{figure*}
\begin{figure*}[ht!]
	\begin{normalsize}
\begin{align}\label{eq:SOPLower}
\text{SOP}_{\text{A}} 
 = & 1-\sum_{n_\mathrm{B}=0}^{\infty}C_{n_\mathrm{B}}\sum_{k_\mathrm{B}=0}^{n_\mathrm{B}}(-1)^{k_\mathrm{B}}\binom{n_\mathrm{B}}{k_\mathrm{B}}\left ( \frac{1}{\overline{\gamma}_\mathrm{E}} \right ) \sum_{n_\mathrm{E}=0}^{\infty}C_{n_\mathrm{E}}       \sum_{q=0}^{k_\mathrm{B}}\frac{1}{q!}\left ( \frac{\tau}{\overline{\gamma}_\mathrm{B}} \right )^q \left ( \frac{1}{\overline{\gamma}_\mathrm{E}}+\frac{\tau}{\overline{\gamma}_\mathrm{B}} \right )^{-1-q} \Gamma\left ( 1+q \right ) \nonumber\\ &\times 
   { }_2F_1\left (-n_\mathrm{E},1+q;1;\frac{\overline{\gamma}_\mathrm{B}}{\overline{\gamma}_\mathrm{B}+\overline{\gamma}_\mathrm{E} \tau} \right )
 \end{align}
	\end{normalsize}
\end{figure*}

\begin{figure*}[ht!]
	\begin{normalsize}
\begin{align}\label{eq:SOPAsin1}
\mathrm{SOP}^\infty \approx 
  & \sum_{n_\mathrm{B}=0}^{\infty}C_{n_\mathrm{B}}\sum_{k_\mathrm{B}=0}^{n_\mathrm{B}}\frac{(-1)^{k_\mathrm{B}} }{\left ( k_\mathrm{B}+1 \right )! }\binom{n_\mathrm{B}}{k_\mathrm{B}}\left ( \frac{\tau}{\overline{\gamma}_\mathrm{B}} \right )^{k_\mathrm{B}+1}\sum_{n_\mathrm{E}=0}^{\infty}C_{n_\mathrm{E}} 
  \sum_{k_\mathrm{E}=0}^{n_\mathrm{E}}\frac{(-1)^{k_\mathrm{E}} }{ k_\mathrm{E}! }\binom{n_\mathrm{E}}{k_\mathrm{E}}\left ( \frac{1}{\overline{\gamma}_\mathrm{E}} \right )^{k_\mathrm{E}+1} \sum_{i=1}^{n}w_i\exp\left ( l_i \right ) l_i^{k_\mathrm{B}+k_\mathrm{E}+1}
 \end{align}
	\end{normalsize}
\end{figure*}

\begin{figure*}[ht!]
	\begin{normalsize}
\begin{align}\label{eq:SOPAsin2}
\mathrm{SOP}^{\infty}\approx 
  & \sum_{n_\mathrm{B}=0}^{\infty}C_{n_\mathrm{B}}\sum_{k_\mathrm{B}=0}^{n_\mathrm{B}}\frac{(-1)^{k_\mathrm{B}} }{\left ( k_\mathrm{B}+1 \right )! }\binom{n_\mathrm{B}}{k_\mathrm{B}}\left ( \frac{\tau\overline{\gamma}_\mathrm{E}}{\overline{\gamma}_\mathrm{B}} \right )^{k_\mathrm{B}+1}\sum_{n_\mathrm{E}=0}^{\infty}C_{n_\mathrm{E}} \frac{\Gamma\left ( k_\mathrm{B}+2 \right )\Gamma\left ( n_\mathrm{E}+1 \right )}{n_\mathrm{E}!}{ }_2F_1\left (-n_\mathrm{E},k_\mathrm{B}+2 ;1;\overline{\gamma}_\mathrm{E} \right ) 
 \end{align}
	\end{normalsize}
	\hrulefill
\end{figure*}

\section{Numerical results and discussions} \label{sect:numericals}
In this section, we validate the accuracy of the proposed expressions{\footnote{Here, it is worth mentioning that depending on the value of the involved channel parameters, these series require different number of terms to attain an accurate approximation. In this context, the overall convergence speed of these series is achieved faster for small values of both dominant rays (e.g, $\mathrm{N}_\mathrm{B}$, and $\mathrm{N}_\mathrm{E}$) and power of Bob's dominant specular components (i.e., $K_{\mathrm{dB}}^{\mathrm{B}}$). For instance, exhaustive tests have shown that the number of terms to arrive at the desired accuracy (e.g., $10^{-6}$) varied from 20 to 30 at Bob and from 4 to 10 at Eve, and the average elapsed times to obtain the aforementioned accuracy were $\sim 14.1, 27.5, 81.7, 103.5, 114.6$ seconds for $\mathrm{N}=1,\dots,5$, respectively. Moreover, the mathematical representation of the derived series consists of well-known elementary and special functions, which can be easily implemented in software for numerical evaluation.}}. for some representative cases via Monte Carlo simulations. 
We define a power ratio parameter similar to the well-known Rician $K$ parameter, e.g., $K_{\mathrm{N}_i}\buildrel \Delta \over = \frac{\Omega_{\mathrm{N}_i}}{\Omega_i}$, with $\Omega_{\mathrm{N}_i}=\sum_{n=0}^{\mathrm{N}_i}V_{n,i}^2$ being the total average power of the specular components. 
For the sake of comparison, the Rayleigh case (i.e., $\mathrm{N}_\mathrm{B}=\mathrm{N}_\mathrm{E}=0$) is included as a reference. 

Before getting into the numerical examples, an important remark is in order. Herein, we
emphasize on providing clear evidence to identify the impact of increasing/decreasing both the number and the power of the dominant specular waves over the secrecy performance. In other words, we aim to determine to what extent it is worth that each of the individual specular waves is treated separately, or it can be safely incorporated into the diffuse component.
\begin{figure}[t]
\centering 
\psfrag{H}[Bc][Bc][0.6]{$K_{\mathrm{dB}}^{\mathrm{B}}=25$ dB}
\psfrag{Z}[Bc][Bc][0.6][0]{$K_{\mathrm{dB}}^{\mathrm{B}}=15$ dB}
\includegraphics[width=1\linewidth]{./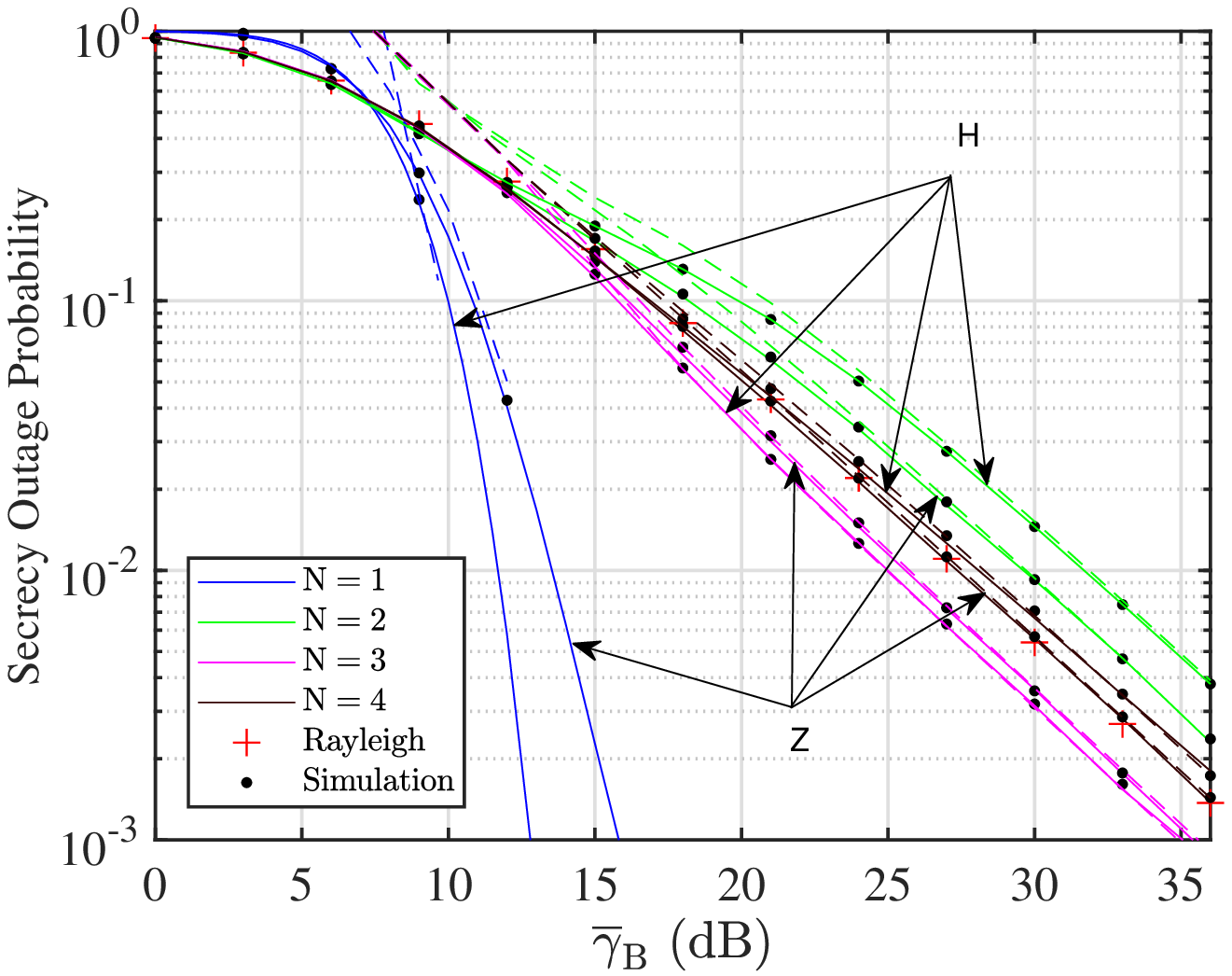} \caption{SOP vs. $\overline{\gamma}_\mathrm{B}$, for different numbers of dominant
specular waves $\mathrm{N}$ by considering balanced amplitude scenario (i.e., $V_{n,i}=1$ for $n=1,\dots, \mathrm{N}_i$). For all curves, the parameter values are:   $R_{\mathrm{S}}$ = 1 bps/Hz, $\overline{\gamma}_\mathrm{E}=4$ dB, $K_{\mathrm{dB}}^{\mathrm{E}}=10$ dB, 
$\Omega_i=1$, and $\mathrm{N}_i=\mathrm{N}$ for $i \in \left \{ \mathrm{B},\mathrm{E} \right \}$. Dashed lines  correspond to asymptotic analysis by using expression~\eqref{eq:SOPAsin1}.}
\label{fig1Sop}
\end{figure}

In Fig.~\ref{fig1Sop}, we compare the SOP as a function of $\overline{\gamma}_\mathrm{B}$ for
different values of dominant specular components $\mathrm{N}_\mathrm{B}=\mathrm{N}_\mathrm{E}=\mathrm{N}$, by considering the case of balanced amplitudes, i.e., $V_{n,\mathrm{B}}=V_{n,\mathrm{E}}$  $\forall \mathrm{B}$, $n=1 \dots \mathrm{N}$. For this scenario, the corresponding fading parameters are given by: $K_{\mathrm{dB}}^{\mathrm{B}}=10 \log_{10} \left ( K_{\mathrm{B}}=K_{\mathrm{N}_\mathrm{B}} \right )\in \left \{ 8,25 \right \}$ dBs with $K_{\mathrm{dB}}^{\mathrm{E}}=0$ dB, $R_{\mathrm{S}}$ = 1 bps/Hz, and $\overline{\gamma}_\mathrm{E}=1$ dB. Note that in all instances, Monte Carlo simulations perfectly match with our derived results. 

We see that the secrecy performance does not monotonically increase with the number of specular components; instead, we see that the cases with $\mathrm{N}_\mathrm{B}=1$ and $\mathrm{N}_\mathrm{B}=2$ bound the secrecy performance when the rest of parameters are fixed. This is in coherence with the fact that for an even number of dominant specular components of equal amplitudes, the probability of total cancellation between them is larger than when an odd number is considered \cite{Abdi}. This increases fading severity for the desired link more relevantly for large $K$, which ultimately degrades the SOP. We also see that for $\mathrm{N}=4$, the performance is very similar than in the Rayleigh case (i.e., $\mathrm{N}\rightarrow\infty$).




In Fig.~\ref{fig2Sop}, we now evaluate the SOP vs. $\overline{\gamma}_\mathrm{B}$ for
different numbers of dominant specular components $\mathrm{N}_\mathrm{B}=\mathrm{N}_\mathrm{E}=\mathrm{N}$ by considering an unbalanced amplitudes scenario. For simplicity, yet without loss of
generality, the amplitudes of successive rays are expressed in terms of the amplitude of the first dominant component, as proposed in~\cite{Espinosa}, that is, $V_{n,i}=\alpha_{n,i} V_{1,i}$ for $n=2, \dots, \mathrm{N}_i$, with $0<\alpha_{n,i}<1$ and $i$ $\in \left \{\mathrm{B},\mathrm{E} \right \}$. Considering this, we set: $\alpha_{n,i}=\alpha_\mathrm{B}=\alpha_\mathrm{E}=0.3$ with $K_{\mathrm{dB}}^{\mathrm{B}}\in\left \{8, 25 \right \}$ dBs, $K_{\mathrm{dB}}^{\mathrm{E}}=0$ dB, $R_{\mathrm{S}}$ = 1 bps/Hz, and $\overline{\gamma}_\mathrm{E}=1$ dB. Here, we investigate the impact of increasing both the number and the power of Bob's dominant rays for the case of unbalanced amplitudes. We observe in all traces that, unlike on the balanced counterpart, the SOP performance now monotonically improves when rising $K_{\mathrm{dB}}^{\mathrm{B}}$ or lowering $\mathrm{N}$, regardless of whether it is even or odd. It can be observed that a reduced number of dominant specular components at Bob is now beneficial from a secrecy perspective. We also see that in all cases, the SOP performance is always better than its Rayleigh counterpart. Regarding the asymptotic behaviour, it can
be noticed that the asymptotic plots accurately approximate the SOP curves at high SNR regime. Additionally, all curves have different slopes. The reason for this will be discussed later.

\begin{figure}[t]
\centering 
\psfrag{H}[Bc][Bc][0.6]{$K_{\mathrm{dB}}^{\mathrm{B}}=25$ dB}
\psfrag{Z}[Bc][Bc][0.6][0]{$K_{\mathrm{dB}}^{\mathrm{B}}=8$ dB}
\includegraphics[width=1\linewidth]{./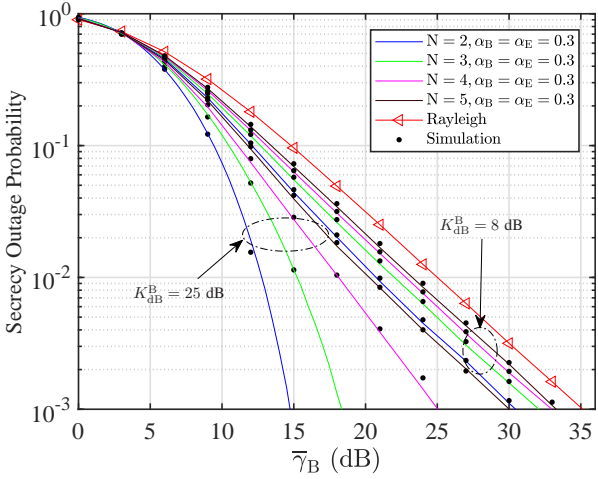} \caption{SOP in terms of $\overline{\gamma}_\mathrm{B}$ for different numbers of dominant
specular waves $\mathrm{N}$, by considering unbalanced amplitude case (i.e., $\alpha_{n,i}=\alpha_i=0.3$). For all cases, the corresponding parameters are set to the following values: $R_{\mathrm{S}}$ = 1 bps/Hz, $\overline{\gamma}_E=1$ dB, $K_{\mathrm{dB}}^{\mathrm{E}}=0$ dB, $\Omega_i=1$, and $\mathrm{N}_i=\mathrm{N}$ for $i \in \left \{ \mathrm{B},\mathrm{E} \right \}$.}
\label{fig2Sop}
\end{figure}

In Fig.~\ref{fig3Sop}, we evaluate both the SOP and the $\text{SOP}_{\text{A}}$ as a function of  $\overline{\gamma}_\mathrm{B}$, in order to understand the interplay between the  number of dominant specular components $\mathrm{N}_\mathrm{B}=\mathrm{N}_\mathrm{E}=\mathrm{N}$, the amplitude imbalance and the power of the dominant components. We use a similar set of parameters as those in Fig.~\ref{fig2Sop}, except for $\left \{ \alpha_\mathrm{B}, \alpha_\mathrm{E} \right \}=\left \{ 0.2, 0.9 \right \} $, $\left \{ \alpha_\mathrm{B}, \alpha_\mathrm{E} \right \}=\left \{ 0.9, 0.2 \right \} $, and $\overline{\gamma}_\mathrm{E}=8$ dB. 
We now observe that the worst secrecy performance is attained for cases where the imbalance for the legitimate user $\alpha_\mathrm{B}$ is smaller than that of $\alpha_\mathrm{E}$, i.e.  when $\alpha_\mathrm{B}>\alpha_\mathrm{E}$. Therefore, for the cases where $\alpha_\mathrm{B}$ is lower than $\alpha_\mathrm{E}$, we can obtain the desired secrecy performance (i.e., a target SOP) for a lower average SNR at Bob. In such case, some other interesting observations can be made: $(i)$ the SOP performance under NWDP fading is much better than in the Rayleigh case, and $(ii)$ the increase on the number of dominant specular rays arriving at Bob is detrimental from a secrecy perspective. 

On the other hand, the worst SOP performances are achieved for the case with ($\alpha_\mathrm{B}=0.9, \alpha_\mathrm{E}=0.2$), which is explained as follows: because the amplitudes for the legitimate link are balanced, this is translated into a more severe fading; conversely, the unbalanced amplitudes for the eavesdropper's link indicate a lower fading severity compared to the Rayleigh case. Combining the two effects, the overall SOP performance is hence worse than when assuming Rayleigh fading for both links.



\begin{figure}[t]
\centering 
\psfrag{H}[Bc][Bc][0.6]{$\alpha_{\mathrm{B}}=0.2$, $\alpha_{\mathrm{E}}=0.9$}
\psfrag{Z}[Bc][Bc][0.6][0]{$\alpha_{\mathrm{B}}=0.9$, $\alpha_{\mathrm{E}}=0.2$}
\includegraphics[width=1\linewidth]{./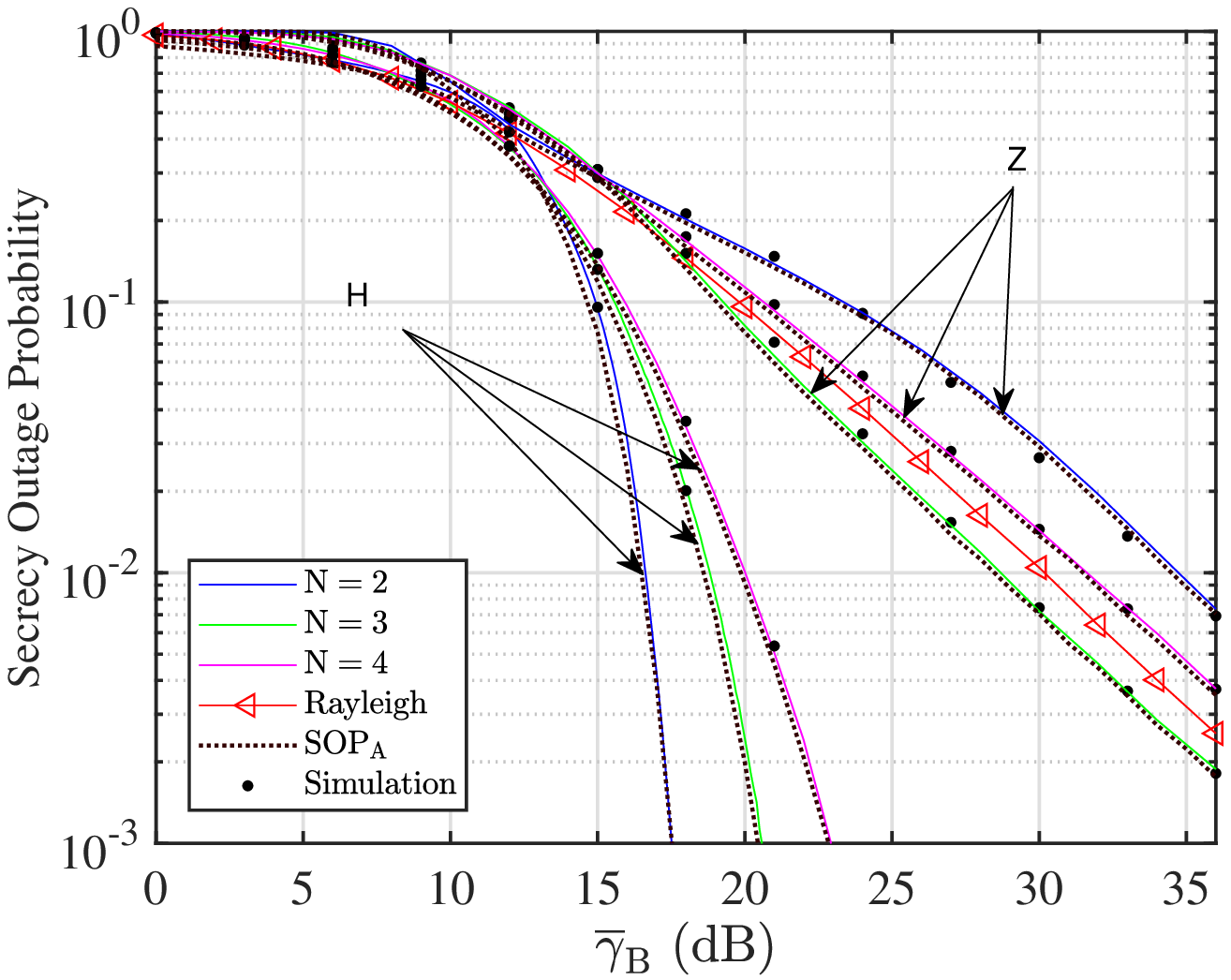} \caption{SOP vs. $\overline{\gamma}_\mathrm{B}$ for different numbers of dominant
specular waves $\mathrm{N}$ by considering unbalanced amplitude case (i.e., $\left \{ \alpha_\mathrm{B}, \alpha_\mathrm{E} \right \}=\left \{ 0.2, 0.9 \right \} $ and $\left \{ \alpha_\mathrm{B}, \alpha_\mathrm{E} \right \}=\left \{ 0.9, 0.2 \right \} $). For all curves, the values of channel parameters are: $R_{\mathrm{S}}$ = 1 bps/Hz, $\overline{\gamma}_\mathrm{E}=8$ dB, $K_{\mathrm{dB}}^{\mathrm{B}}=K_{\mathrm{dB}}^{\mathrm{E}}=25$ dB $\Omega_i=1$, and $\mathrm{N}_i=\mathrm{N}$ for $i \in \left \{ \mathrm{B},\mathrm{E} \right \}$   }
\label{fig3Sop}
\end{figure}

Fig.~\ref{fig4Sop} presents the evolution of the SOP as a function of $R_{S}$, considering the following channel settings: $\overline{\gamma}_\mathrm{E}=1$ dB, $\overline{\gamma}_\mathrm{B}=8$ dB, $K_{\mathrm{dB}}^{\mathrm{B}}=K_{\mathrm{dB}}^{\mathrm{E}}=20$ dB, and $\left \{ \alpha_\mathrm{B}, \alpha_\mathrm{E} \right \}$$=$$\left \{ 0.2, 0.3 \right \}$. Herein, we analyze the effect of having a different number of dominant specular rays at both Bob and Eve over the secrecy performance. We consider the cases $\mathrm{N}_\mathrm{E}=\left \{2,3 \right \}$ and $\mathrm{N}_\mathrm{B}=\left \{2,3,4,5 \right \}$, and for the sake of comparison, we also include the case $\mathrm{N}_\mathrm{B} = \mathrm{N}_\mathrm{E}$. Once again we see that having a larger number of multipath waves at the legitimate receiver in this unbalanced scenario effectively increases channel variability, which causes the SOP obtained when transmitting at a rate $R_\mathrm{S}$ to be increased with $\mathrm{N}_\mathrm{B}$. We also see that for a fixed $\mathrm{N}_\mathrm{B}$, increasing the number of rays on the eavesdropper's channel is also detrimental for the SOP. This can be understood by recalling that in the presence of a single dominant specular component for each link and a strong LOS condition, the set-up almost reduces to the AWGN case, for which the SOP is zero as $\overline{\gamma}_\mathrm{B}>\overline{\gamma}_\mathrm{E}$. Hence, having a reduced number of rays and a dominant component much larger than the remaining specular waves turn out being beneficial for physical layer security.

\begin{figure}[t]
\centering 
\psfrag{Z}[Bc][Bc][0.6][0]{$K_{\mathrm{dB}}^{\mathrm{B}}=K_{\mathrm{dB}}^{\mathrm{E}}=20$ dB}
\includegraphics[width=1\linewidth]{./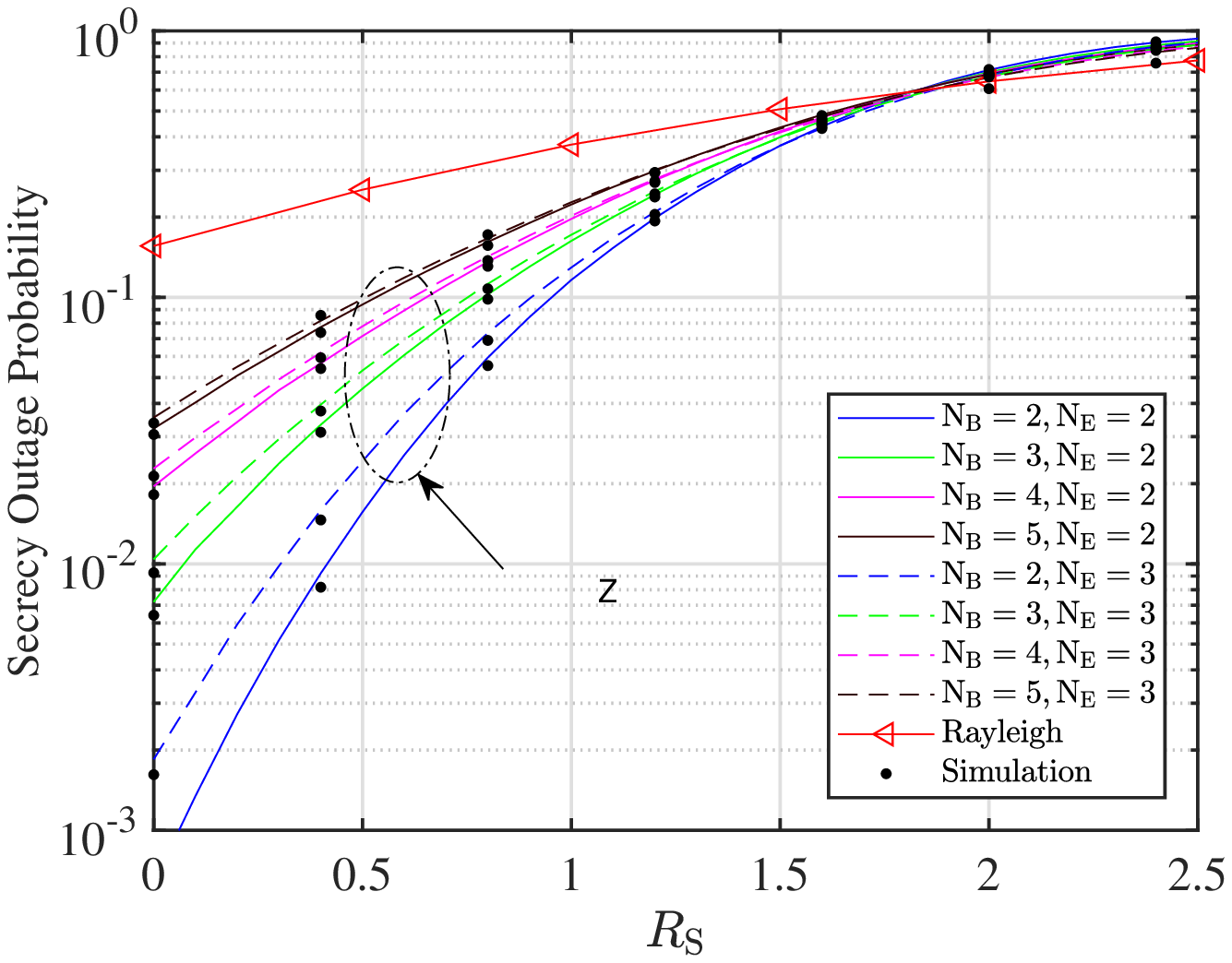} \caption{SOP in terms of $R_{\mathrm{S}}$ for different numbers of dominant
specular waves of $\mathrm{N}_\mathrm{B}=\left \{2,3,4,5 \right \}$  with
regard to $\mathrm{N}_\mathrm{E}=\left \{2,3 \right \}$ by considering unbalanced amplitude case (i.e., $\left \{ \alpha_\mathrm{B}, \alpha_\mathrm{E} \right \}=\left \{ 0.2, 0.3 \right \} $). For all curves, we set: $\overline{\gamma}_\mathrm{E}=1$ dB, $\overline{\gamma}_\mathrm{B}=8$ dB, $K_{\mathrm{dB}}^{\mathrm{B}}=K_{\mathrm{dB}}^{\mathrm{E}}=20$ dB, and $\Omega_i=1$ for $i \in \left \{ \mathrm{B}, \mathrm{E} \right \}$.}
\label{fig4Sop}
\end{figure}

\begin{figure}[t]
\centering 
\psfrag{Z}[Bc][Bc][0.6][0]{$R_{\mathrm{S}}=2.5$}
\psfrag{V}[Bc][Bc][0.6][0]{$R_{\mathrm{S}}=1$}
\includegraphics[width=1\linewidth]{./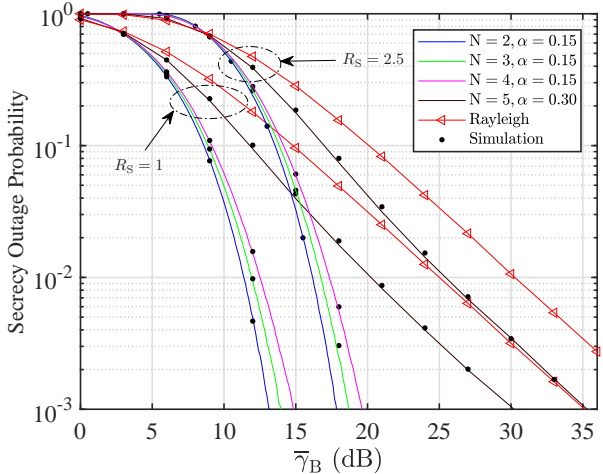} \caption{SOP vs. $\overline{\gamma}_\mathrm{B}$ for different numbers of dominant specular waves $N=N_\mathrm{B}=N_\mathrm{E}$ by varying the value
of $R_{\mathrm{S}}$ and assuming unbalanced amplitudes (i.e, $\alpha=\left \{0.15, 0.30 \right \}$). Also, $\overline{\gamma}_\mathrm{E}=1$ dB, $K_{\mathrm{dB}}^{B}=25$ dB, and $K_{\mathrm{dB}}^{\mathrm{E}}=0$ dB. }
\label{fig5Sop}
\end{figure}

Next, Fig.~\ref{fig5Sop} illustrates the SOP vs. $\overline{\gamma}_\mathrm{B}$ for different numbers of rays $\mathrm{N}_\mathrm{B}=\mathrm{N}_\mathrm{E}=\mathrm{N}$ with $\overline{\gamma}_\mathrm{E}=1$ dB, $K_{\mathrm{dB}}^{\mathrm{B}}=25$ dB, and $K_{\mathrm{dB}}^{\mathrm{E}}=0$ dB. Moreover, we set: $R_{\mathrm{S}}=\left \{1, 2.5 \right \}$ with  $\alpha=\alpha_\mathrm{B},=\alpha_\mathrm{E}=0.15$ and $\alpha=\alpha_\mathrm{B},=\alpha_\mathrm{E}=0.30$ for $\mathrm{N}=2, \dots 4$, and $\mathrm{N}=5$, respectively. From the figure can be observed that both the relative amplitudes and the number of the dominant waves play a pivotal role on the secrecy performance. For instance, we see that the decay of the SOP is rather abrupt for $\alpha=0.15$ and $\mathrm{N}=1,\dots,4$ regardless of the choice of $R_{\mathrm{S}}=\left \{1, 2.5 \right \}$. However, when both the number of rays and the relative amplitudes of the rays are slightly increased (say $\alpha=0.3$ and $\mathrm{N}=5$), then the SOP is dramatically impaired and the decay is now similar to the Rayleigh case. This is in coherence with the observations made in~\cite{Espinosa} in the limit case of the absence of diffuse scattering, as $\alpha \left ( \mathrm{N}_\mathrm{B}-1 \right )<1$.  

Finally, in Fig.~\ref{fig6Asintota}, we plot  the SOP vs. $\overline{\gamma}_\mathrm{B}$ and the two asymptotic results~\eqref{eq:SOPAsin1},~\eqref{eq:SOPAsin2}, respectively. In  all the cases, we employ  equal numbers of rays at both $\mathrm{B}$ and $\mathrm{E}$, i.e., $\mathrm{N}=\mathrm{N}_\mathrm{B}=\mathrm{N}_\mathrm{E}$, $R_\mathrm{S}=1$ bps/Hz, $\Omega_\mathrm{B}=\Omega_\mathrm{E}=1$, and $\overline{\gamma}_\mathrm{E}=8$ dB. Also, yet without loss of generality, we assume the following cases: \textit{Case $\mathrm{I}$ $(\mathrm{N}=1)$:} Balanced amplitudes, $V_{1,\mathrm{B}}=V_{1,\mathrm{E}}=1$, and $K_{\mathrm{dB}}^{\mathrm{B}}=K_{\mathrm{dB}}^{\mathrm{E}}=10$ dB; \textit{Case $\mathrm{II}$ $(\mathrm{N}=2)$:} Unbalanced amplitudes, $V_{2,i}=\alpha_{2,i} V_{1,i}$ with $V_{1,i}=1$ for $i$ $\in \left \{\mathrm{B},\mathrm{E} \right \}$, $\alpha_{2,\mathrm{B}}=0.2$, $\alpha_{2,\mathrm{E}}=0.9$, and $K_{\mathrm{dB}}^{\mathrm{B}}=K_{\mathrm{dB}}^{\mathrm{E}}=15$ dB; \textit{Case $\mathrm{III}$ $(\mathrm{N}=3)$:} Unbalanced amplitudes, $V_{n,i}=\alpha_{n,i} V_{1,i}$ with $V_{1,i}=1$ for $i$ $\in \left \{\mathrm{B},\mathrm{E} \right \}$, $\alpha_{n,i}=\alpha_\mathrm{B}=\alpha_\mathrm{E}=0.3$ for $n=2,3.$, and $K_{\mathrm{dB}}^{\mathrm{B}}=K_{\mathrm{dB}}^{\mathrm{E}}=10$ dB. Here, our primary aim is to analyze the secrecy diversity
order of the main links in the proposed scenarios. Firstly,
based on the asymptotic expressions (i.e.,~\eqref{eq:SOPAsin1}, and ~\eqref{eq:SOPAsin2}), we see that the exponents for the $\overline\gamma_\mathrm{B}$ terms depend on one of the summation indexes (i.e., $(k_{\mathrm{B}}+1)$). This suggests that each of these terms contributes in different ways to the decay of the SOP, which explains that the slope of the SOP is different depending on the range of values of $\overline\gamma_\mathrm{B}$. As the SNR is increased, it is only the first term of the series which contributes to the SOP, revealing a diversity order of one (vide all cases in Fig.~\ref{fig6Asintota}). However, we can see that such diversity order is not useful for \textit{Case $\mathrm{II}$}, which justifies the need of the more accurate asymptotic expressions here provided, compared to those only relying on $\mathrm{G}_d$ expression. Also, we can observe that the asymptotic analytical in~\eqref{eq:SOPAsin1} is tighter than the asymptotic one given in~\eqref{eq:SOPAsin2}.
\begin{figure}[t]
\centering 
\psfrag{Z}[Bc][Bc][0.6][0]{\textit{Case $\mathrm{I}$}}
\psfrag{H}[Bc][Bc][0.6][0]{\textit{Case $\mathrm{III}$}}
\psfrag{V}[Bc][Bc][0.6][0]{\textit{Case $\mathrm{II}$}}
\includegraphics[width=1\linewidth]{./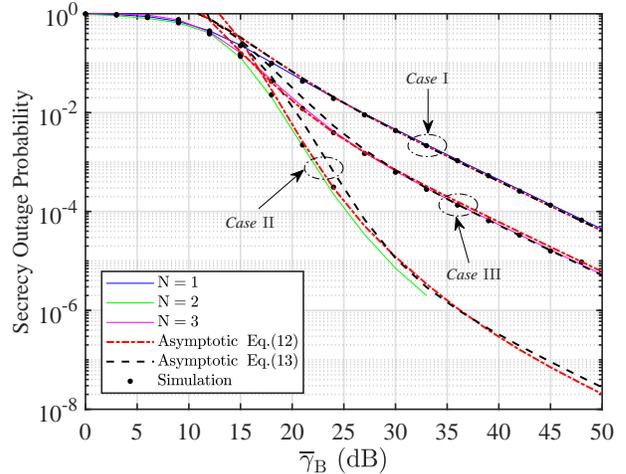} \caption{SOP vs. $\overline{\gamma}_\mathrm{B}$ for different numbers of dominant specular waves $N=N_\mathrm{B}=N_\mathrm{E}$ by  assuming for all cases $R_\mathrm{S}=1$ bps/Hz, $\Omega_\mathrm{B}=\Omega_\mathrm{E}=1$, and $\overline{\gamma}_\mathrm{E}=8$ dB.}
\label{fig6Asintota}
\end{figure}

\section{Conclusions}
We investigated how the explicit consideration of the incident waves arriving at the receiver ends may impact physical layer security performance in the context of wireless fading channels. The analytical results here presented complement and generalize those previously reported in the literature, and support the need of using ray-based fading models in those situations on which a reduced number of multipath waves are considered. The main takeaways of our work can be summarized as: $(i)$ abundant
dominant specular rays impair the SOP, so scenarios
with a reduced number of rays arriving at both
Bob and Eve are beneficial whenever $\overline\gamma_B>\overline\gamma_E$; $(ii)$ balanced amplitudes for the eavesdropper's link and unbalanced amplitudes for the desired link are the most favorable case from a PLS perspective; $(iii)$ a significant increase
on the power of Bob's dominant specular components with respect to the power of Eve's dominant specular components (i.e., $K_\mathrm{B}>>K_\mathrm{E}$), in the case of balanced amplitudes, is a worst case scenario for secrecy performance.


\appendices
\section{ Proofs of Lemma~\ref{Lema1}}
\label{ap:SOPs}
\subsection{$\mathrm{SOP}$}
Substituting~\eqref{eq:SNR} into~\eqref{eq:sop}, we can obtain
\begin{align}\label{eq1ApendiceB}
 \text{SOP}=&\underset{C_1}{\underbrace{\sum_{n_\mathrm{B}=0}^{\infty}C_{n_\mathrm{B}}\sum_{k_\mathrm{B}=0}^{n_\mathrm{B}}\frac{(-1)^{k_\mathrm{B}}}{k_\mathrm{B}!}\binom{n_\mathrm{B}}{k_\mathrm{B}} \frac{1}{\overline{\gamma}_\mathrm{E}} \sum_{n_\mathrm{E}=0}^{\infty} C_{n_\mathrm{E}} }} \nonumber \\ &\times \underset{I_1}{\underbrace{\int_{0}^{\infty} \exp\left ( -\frac{\gamma_\mathrm{E}}{\overline{\gamma}_\mathrm{E}} \right )L_{n_\mathrm{E}}\left ( \frac{\gamma_\mathrm{E}}{\overline{\gamma}_\mathrm{E}} \right )}}  \nonumber \\ &\times \underset{I_1}{\underbrace{\gamma \left ( k_\mathrm{B}+1,\frac{\left ( \tau \gamma_\mathrm{E}+\tau-1 \right )}{\overline{\gamma}_\mathrm{B}} \right )d\gamma_\mathrm{E}.}}
 \end{align}
 Using~\cite[Eq.~(8.352.1)]{Gradshteyn} into~\eqref{eq1ApendiceB}, $I_1$ can be rewritten as
 \begin{align}\label{eq2ApendiceB}
I_1=& \underset{I_2}{\underbrace{ k_\mathrm{B}! \int_{0}^{\infty} \exp\left ( -\frac{\gamma_\mathrm{E}}{\overline{\gamma}_\mathrm{E}} \right )L_{n_\mathrm{E}}\left ( \frac{\gamma_\mathrm{E}}{\overline{\gamma}_\mathrm{E}} \right ) d\gamma_\mathrm{E}}} \nonumber \\ &- k_\mathrm{B}! \sum_{q=0}^{k_\mathrm{B}}\frac{1}{q!}\left ( \frac{1}{\overline{\gamma}_\mathrm{B}} \right )^q \underset{I_3}{\underbrace{\int_{0}^{\infty}\exp\left ( -\frac{\gamma_\mathrm{E}}{\overline{\gamma}_\mathrm{E}} \right ) }}\nonumber \\ &\times  \underset{I_3}{\underbrace{L_{n_\mathrm{E}}\left ( \frac{\gamma_\mathrm{E}}{\overline{\gamma}_\mathrm{E}} \right ) \exp\left ( -\frac{\tau\gamma_\mathrm{E}+\tau-1}{\overline{\gamma}_\mathrm{B}} \right )}} \nonumber \\ &\times \underset{I_3}{\underbrace{\left (  \tau\gamma_\mathrm{E}+\tau-1 \right )^q }}d\gamma_\mathrm{E}.
 \end{align}
 Here, with the help of~\cite[Eq.~(7.414.6)]{Gradshteyn} the value of the integral $I_2$ can be $\overline{\gamma}_\mathrm{E}$ when $n_\mathrm{E}=0$ or zero otherwise (i.e., $n_\mathrm{E}\not= 0$). In the former case, after by performing some algebraic manipulations, the first term of the SOP can be simplified as $C_1 k_\mathrm{B}!\overline{\gamma}_\mathrm{E} =1$. Next, by using~\cite[Eq.~(1.111)]{Gradshteyn}, $I_3$ can be expressed as
 \begin{align}\label{eq4ApendiceB}
I_3= &\sum_{a=0}^{q}\binom{q}{a}\left ( \tau-1 \right )^{q-a}\tau^a \exp\left ( -\frac{\tau-1}{\overline{\gamma}_\mathrm{B}} \right )  \underset{I_4}{\underbrace{\int_{0}^{\infty} L_{n_\mathrm{E}}\left ( \frac{\gamma_\mathrm{E}}{\overline{\gamma}_\mathrm{E}} \right ) }} \nonumber \\  & \times  \underset{I_4}{\underbrace{ \gamma_\mathrm{E}^a \exp\left (-\frac{\gamma_\mathrm{E}}{\overline{\gamma}_\mathrm{E}} -\frac{\tau\gamma_\mathrm{E}}{\overline{\gamma}_\mathrm{B}} \right )  d\gamma_\mathrm{E}.}}
\end{align}
Then, by solving the
corresponding integral in $I_4$, we get
 \begin{align}\label{eq5ApendiceB}
I_4= & \left ( \frac{1}{\overline{\gamma}_\mathrm{E}}+\frac{\tau}{\overline{\gamma}_\mathrm{B}} \right )^{-1-a}\Gamma\left ( 1+a \right )\nonumber \\  & \times { }_2F_1\left ( 1+a,-n_\mathrm{E},1,\frac{\overline{\gamma}_\mathrm{B}}{\overline{\gamma}_\mathrm{B}+\overline{\gamma}_\mathrm{E} \tau} \right ). 
\end{align}
Next, by combining~\eqref{eq1ApendiceB} to~\eqref{eq5ApendiceB}, the \text{SOP} can be formulated as in~\eqref{eq:SOPExact}, which concludes the proof.  

\subsection{$\mathrm{SOP}_{\mathrm{A}}$}
Substituting~\eqref{eq:SNR} into~\eqref{eq:sopL}, we get
\begin{align}\label{eq7ApendiceB}
 \text{SOP}_{\text{A}}=& 
\underset{C_1}{\underbrace{ \sum_{n_\mathrm{B}=0}^{\infty}C_{n_\mathrm{B}}\sum_{k_\mathrm{B}=0}^{n_\mathrm{B}}\frac{(-1)^{k_\mathrm{B}}}{k_\mathrm{B}!}\binom{n_\mathrm{B}}{k_\mathrm{B}} \frac{1}{\overline{\gamma}_\mathrm{E}} \sum_{n_\mathrm{E}=0}^{\infty} C_{n_\mathrm{E}}}} \nonumber \\ &\times \underset{I_5}{\underbrace{\int_{0}^{\infty} \exp\left ( -\frac{\gamma_\mathrm{E}}{\overline{\gamma}_\mathrm{E}} \right )L_{n_\mathrm{E}}\left ( \frac{\gamma_\mathrm{E}}{\overline{\gamma}_\mathrm{E}} \right ) }} \nonumber \\ &\times \underset{I_5}{\underbrace{\gamma \left ( k_\mathrm{B}+1,\frac{ \tau \gamma_\mathrm{E} }{\overline{\gamma}_\mathrm{B}} \right )d\gamma_\mathrm{E}.}}
 \end{align}
 Again, by using~\cite[Eq.~(8.352.1)]{Gradshteyn} into~\eqref{eq7ApendiceB},~$I_5$ can be reformulated  as
\begin{align}\label{eq8ApendiceB}
I_5=& \underset{I_5}{\underbrace{ k_\mathrm{B}! \int_{0}^{\infty} \exp\left ( -\frac{\gamma_\mathrm{E}}{\overline{\gamma}_\mathrm{E}} \right )L_{n_\mathrm{E}}\left ( \frac{\gamma_\mathrm{E}}{\overline{\gamma}_\mathrm{E}} \right ) d\gamma_\mathrm{E}}} \nonumber \\ &- k_\mathrm{B}! \sum_{q=0}^{k_\mathrm{B}}\frac{1}{q!}\left ( \frac{1}{\overline{\gamma}_\mathrm{B}} \right )^q \tau^q\underset{I_6}{\underbrace{\int_{0}^{\infty}\exp\left ( -\frac{\gamma_\mathrm{E}}{\overline{\gamma}_\mathrm{E}} \right ) }}\nonumber \\ &\times  \underset{I_6}{\underbrace{L_{n_\mathrm{E}}\left ( \frac{\gamma_\mathrm{E}}{\overline{\gamma}_\mathrm{E}} \right ) \exp\left ( -\frac{\tau\gamma_\mathrm{E}}{\overline{\gamma}_\mathrm{B}} \right )  \gamma_\mathrm{E}^q d\gamma_\mathrm{E}. }}
\end{align}
Here, note that $I_5$ is equivalent to $I_2$. Therefore, the first term of the $ \text{SOP}_{\text{A}}$ (i.e., $C_1 k_\mathrm{B}!\overline{\gamma}_\mathrm{E}$) once again equals unity, as discussed in the previous proof. On the other hand, by solving the corresponding integral in $I_6$, yields
\begin{align}\label{eq9ApendiceB}
I_6= & \left ( \frac{1}{\overline{\gamma}_\mathrm{E}}+\frac{\tau}{\overline{\gamma}_\mathrm{B}} \right )^{-1-q}\Gamma\left ( 1+q \right )\nonumber \\  & \times { }_2F_1\left ( -n_\mathrm{E},1+q,1,\frac{\overline{\gamma}_\mathrm{B}}{\overline{\gamma}_\mathrm{B}+\overline{\gamma}_\mathrm{E} \tau} \right ).
\end{align}
Finally, by combining~\eqref{eq7ApendiceB} to~\eqref{eq9ApendiceB}, the $\text{SOP}_{\text{A}}$ is reached  as in~\eqref{eq:SOPLower}, which completes the proof.

\section{Proofs of Lemma~\ref{Lema2}}
\label{ap:SOPAsintotas}

\subsection{$\mathrm{SOP}^{\infty}$}
\subsubsection{Keeping $\overline{\gamma}_\mathrm{E}$ Fixed and $\overline{\gamma}_\mathrm{B}\rightarrow \infty$ }
In order to approximate~\eqref{eq:3} as $\overline{\gamma}_\mathrm{B}\rightarrow \infty$, we use the following relationship $\gamma \left (a,x \right )\approx x^s/s   $ as $x\rightarrow  0$. Therefore,~\eqref{eq:3} can be asymptotically expressed by
\begin{align}
\label{apenC:1}
F_\mathrm{B}(\gamma_\mathrm{B})\approx \sum_{n_\mathrm{B}=0}^{\infty}C_{n_\mathrm{B}}\sum_{k_\mathrm{B}=0}^{n_\mathrm{B}}\frac{(-1)^{k_\mathrm{B}}}{(k_\mathrm{B}+1)!}\binom{n_\mathrm{B}}{k_\mathrm{B}}\left (   \frac{\gamma_\mathrm{B}}{\overline{\gamma}_\mathrm{B} }  \right )^{k_\mathrm{B}+1}.
\end{align}
Substituting~\eqref{apenC:1} together with~\eqref{eq:2} into~\eqref{eq:sopL}, it follows that
\begin{align}\label{apenC:2}
 \mathrm{SOP}^{\infty}\approx&
  \sum_{n_\mathrm{B}=0}^{\infty}C_{n_\mathrm{B}}\sum_{k_\mathrm{B}=0}^{n_\mathrm{B}}\frac{(-1)^{k_\mathrm{B}}}{(k_\mathrm{B}+1)!}\binom{n_\mathrm{B}}{k_\mathrm{B}} \left (   \frac{1}{\overline{\gamma}_\mathrm{E} }  \right )\left (   \frac{\tau}{\overline{\gamma}_\mathrm{B} }  \right )^{k_\mathrm{B}+1} \nonumber \\ & \times  \sum_{n_\mathrm{E}=0}^{\infty} C_{n_\mathrm{E}} \underset{I_7}{\underbrace{\int_{0}^{\infty}\gamma_\mathrm{E}^{k_\mathrm{B}+1} \exp\left ( -\frac{\gamma_\mathrm{E}}{\overline{\gamma}_\mathrm{E}} \right )L_{n_\mathrm{E}}\left ( \frac{\gamma_\mathrm{E}}{\overline{\gamma}_\mathrm{E}} \right ) d\gamma_\mathrm{E} }}
 \end{align}
Next, with the aid of~\cite[Eq.~(7.414.7)]{Gradshteyn} to solve the integral in $I_7$,
the asymptotic SOP can be expressed as in~\eqref{eq:SOPAsin2}, which concludes the proof.
\subsubsection{Both $\overline{\gamma}_\mathrm{E}\rightarrow \infty$, $\overline{\gamma}_\mathrm{B}\rightarrow \infty$, and Fixed Ratio $\overline{\gamma}_\mathrm{E}/\overline{\gamma}_\mathrm{B}$}
From~\eqref{apenC:1}, the asymptotic PDF of $\mathrm{E}$ (i.e, $\overline{\gamma}_\mathrm{E}\rightarrow \infty$ ) is given by
\begin{align}
\label{apenC:3}
f_\mathrm{E}(\gamma_\mathrm{E})\approx \sum_{n_\mathrm{E}=0}^{\infty}C_{n_\mathrm{E}}\sum_{k_\mathrm{E}=0}^{n_\mathrm{E}}\frac{(-1)^{k_\mathrm{E}}}{k_\mathrm{E}!}\binom{n_\mathrm{E}}{k_\mathrm{E}}\left (   \frac{1}{\overline{\gamma}_\mathrm{E} }  \right )^{k_\mathrm{E}+1}  \gamma_\mathrm{E}^{k_\mathrm{E}}.
\end{align}
Substituting~\eqref{apenC:1} and~\eqref{apenC:3} into~\eqref{eq:sopL}, we have
\begin{align}
\label{apenC:4}
 \mathrm{SOP}^{\infty}\approx &
\sum_{n_\mathrm{B}=0}^{\infty}C_{n_\mathrm{B}}\sum_{k_\mathrm{B}=0}^{n_\mathrm{B}}\frac{(-1)^{k_\mathrm{B}}}{(k_\mathrm{B}+1)!}\binom{n_\mathrm{B}}{k_\mathrm{B}}\left (   \frac{\tau}{\overline{\gamma}_\mathrm{B} }  \right )^{k_\mathrm{B}+1} \nonumber \\
&\times \sum_{n_\mathrm{E}=0}^{\infty}C_{n_\mathrm{E}}\sum_{k_\mathrm{E}=0}^{n_\mathrm{E}}\frac{(-1)^{k_\mathrm{E}}}{k_\mathrm{E}!}\binom{n_\mathrm{E}}{k_\mathrm{E}}\left (   \frac{1}{\overline{\gamma}_\mathrm{E} }  \right )^{k_\mathrm{E}+1}  \nonumber \\
& \times  \underset{I_8}{\underbrace{\int_{0}^{\infty}\gamma_\mathrm{E}^{k_\mathrm{B}+k_\mathrm{E}+1} d\gamma_\mathrm{E} }}.
\end{align}
To solve $I_8$, one can rewrite it as
\begin{align}
\label{apenC:5}
 I_8=& \int_{0}^{\infty}\exp \left ( -\gamma_\mathrm{E} \right ) f\left ( \gamma_\mathrm{E} \right )  d\gamma_\mathrm{E},
\end{align}
where $f\left ( \gamma_\mathrm{E} \right )=\exp\left ( \gamma_\mathrm{E} \right ) \gamma_\mathrm{E}^{k_\mathrm{B}+k_\mathrm{E}+1}
$. Now, according to the Gauss-Laguerre quadrature
method~\cite[Eq.~(25.4.45)]{Abramowitz}, $I_8$ can be closely approximated by a weighted sum as
\begin{align}
\label{apenC:6}
 I_8\approx& \sum_{i=1}^{n}w_i f\left ( l_i \right ),
\end{align}
in which $l_i$ is the $i$th zero of
the Laguerre polynomial $L_n(\gamma_\mathrm{E})$ ~\cite[Eq.~(22.2.13)]{Abramowitz}, and $w_i=l_i\left [ (n+1)L_{n+1}\left ( l_i \right ) \right ]^{-2}$.\footnote{Experience in tests carried out show that, for large values of $n$ of the Laguerre polynomial, it becomes computationally hard to obtain a solution of the $\mathrm{SOP}^{\infty}$. However, choosing $n = 2$ is a good rule of thumb that has proven to be highly accurate with little computational effort.}This completes the proof.


\end{document}